\documentclass{article}

\usepackage{arxiv}

\usepackage[utf8]{inputenc} 
\usepackage[T1]{fontenc}    
\usepackage{hyperref}       
\usepackage{url}            
\usepackage{booktabs}       
\usepackage{amsfonts}       
\usepackage{amsmath,amssymb,amsthm}
\usepackage{nicefrac}       
\usepackage{microtype}      
\usepackage{graphicx}
\usepackage[numbers]{natbib}
\usepackage{doi}
\usepackage{tikz}
\usetikzlibrary{arrows.meta,positioning,shapes.geometric,fit}
\usepackage{enumitem}
\usepackage{xcolor}
\usepackage{multirow}
\usepackage{listings}
\usepackage{algorithm}
\usepackage{algorithmic}

\newtheorem{definition}{Definition}
\newtheorem{proposition}{Proposition}

\title{From Static Repositories to Agentic Knowledge Webs:\\
ResearchTwin and the S-Index for Federated\\
Human-AI Research Discovery}

\author{
	\href{https://orcid.org/0000-0003-3159-6321}{\includegraphics[scale=0.06]{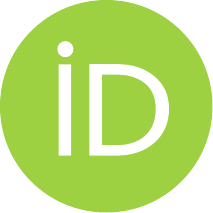}\hspace{1mm}Martin G. Frasch} \\
	Health Stream Analytics, LLC \\
	University of Washington, Seattle \\
	\texttt{mfrasch@uw.edu} \\
}

\date{February 2026}

\renewcommand{\shorttitle}{ResearchTwin and the S-Index for Federated Human-AI Research Discovery}

\hypersetup{
pdftitle={From Static Repositories to Agentic Knowledge Webs: ResearchTwin and the S-Index for Federated Human-AI Research Discovery},
pdfsubject={cs.DL, cs.IR, cs.AI},
pdfauthor={Martin G. Frasch},
pdfkeywords={digital twin, research impact metrics, FAIR principles, retrieval-augmented generation, federated architecture, scholarly communication, inter-agentic discovery},
}

\begin{document}
\maketitle

\begin{abstract}
The exponential growth of scientific literature, datasets, and code repositories has created a \emph{discovery bottleneck} that impedes knowledge synthesis, reproducibility, and cross-disciplinary collaboration. Traditional dissemination formats---static PDFs, siloed code hosting, and fragmented data repositories---fail to represent the interconnected narrative of modern research. Conventional impact metrics such as the H-index measure citation counts alone, neglecting the substantial contributions of reusable code and shared datasets. We present \textbf{ResearchTwin}, an open-source federated platform that transforms a researcher's complete scholarly output into a conversational digital twin, together with a preliminary evaluation of its deployed prototype. The system is built on a \emph{Bimodal Glial-Neural Optimization} (BGNO) architecture comprising a Multi-Modal Connector Layer, a Glial Layer for caching and rate management, and a Neural Layer implementing Retrieval-Augmented Generation with a provider-agnostic LLM backend. We formalize the \textbf{S-index}, building on our earlier QIC framework for data-centric impact measurement \cite{frasch2025qic}, into a composite metric that extends FAIR principles---via a binary accessibility/licensing gate, field-normalized impact scoring, and geometric collaboration scaling---to quantify the multi-modal impact of research artifacts. A case study comparing two researchers with similar H-indexes but substantially different S-indexes demonstrates that the metric captures dimensions of impact---particularly dataset and code contributions---invisible to citation-based measures alone. ResearchTwin exposes an inter-agentic discovery API using Schema.org typed responses, enabling autonomous AI agents to navigate researcher profiles via HATEOAS links and discover cross-lab synergies. A three-tier federated architecture---Local Nodes, Hubs, and Hosted Edges---preserves data sovereignty while enabling global discoverability. The platform is released as open-source software at \url{https://github.com/martinfrasch/ResearchTwin}.
\end{abstract}

\keywords{digital twin \and research impact metrics \and FAIR principles \and retrieval-augmented generation \and federated architecture \and scholarly communication \and inter-agentic discovery}

\section{Introduction}
\label{sec:introduction}

The pace of scientific publication has accelerated to an unprecedented scale. In 2024 alone, an estimated 3.5 million peer-reviewed articles were indexed by major databases, accompanied by millions of datasets deposited in repositories such as Figshare and Zenodo, and hundreds of thousands of research-related code repositories created on GitHub. While this growth reflects vibrant research activity, it simultaneously creates a \emph{discovery bottleneck}: individual researchers cannot keep pace with the literature relevant to their own sub-fields, let alone identify cross-disciplinary opportunities for data reuse or code integration.

Current dissemination practices exacerbate this problem. Research narratives remain fragmented across static PDF documents, isolated code repositories, and metadata-sparse data archives. A given project's full contribution---the paper describing the method, the code implementing it, and the dataset validating it---is scattered across platforms with no machine-readable links connecting these artifacts. Discovering that a particular dataset pairs naturally with a codebase from another lab requires serendipity rather than systematic retrieval.

Impact measurement compounds the issue. The H-index \cite{hirsch2005} has served as the dominant measure of research productivity for two decades, yet it captures only citation-based influence among publications. As research increasingly produces reusable code (via GitHub) and shared data (via Figshare, Zenodo, or Dryad), a significant portion of scholarly impact goes unmeasured. A highly-cited paper whose accompanying code has been forked thousands of times and whose dataset has been downloaded by hundreds of labs has a fundamentally different impact profile from one with equivalent citations but no reusable artifacts---yet the H-index treats them identically.

We argue that the next generation of scholarly infrastructure must satisfy three requirements: (1) \textbf{multi-modal integration}, unifying papers, code, and data into a single queryable representation; (2) \textbf{conversational access}, enabling both human researchers and AI agents to explore research artifacts through natural language; and (3) \textbf{composite impact measurement}, quantifying the full spectrum of contributions including code utility and data reuse.

This paper presents \textbf{ResearchTwin}, a federated platform that addresses all three requirements. The system is built on a \emph{Bimodal Glial-Neural Optimization} (BGNO) architecture inspired by the separation of metabolic support and signal processing in biological neural tissue. We formalize the \textbf{S-index}, extending our earlier QIC framework \cite{frasch2025qic} and FAIR principles \cite{wilkinson2016} into a quantitative composite metric for multi-modal research impact. We describe an inter-agentic discovery API using Schema.org types that enables autonomous AI agents to traverse researcher profiles and discover cross-lab synergies. The platform is released as open-source software under the MIT license.

The remainder of this paper is organized as follows. Section~\ref{sec:related} surveys related work. Section~\ref{sec:architecture} presents the BGNO architecture. Section~\ref{sec:sindex} formalizes the S-index. Section~\ref{sec:discovery} describes the inter-agentic discovery protocol. Section~\ref{sec:federation} details the federated architecture. Section~\ref{sec:implementation} covers implementation. Section~\ref{sec:evaluation} presents a preliminary evaluation based on the deployed system. Section~\ref{sec:discussion} discusses advantages and limitations. Section~\ref{sec:future} outlines future work, and Section~\ref{sec:conclusion} concludes.

\section{Related Work}
\label{sec:related}

\subsection{Digital Twins in Manufacturing and Beyond}

The Digital Twin concept originated in manufacturing, where Grieves \cite{grieves2014} proposed maintaining a virtual replica of a physical product throughout its lifecycle. A Digital Twin mirrors the state, behavior, and context of its physical counterpart, enabling simulation, monitoring, and optimization. This paradigm has since expanded to healthcare, urban planning, and infrastructure management. We adapt the concept to the research domain: a \emph{Research Digital Twin} mirrors a researcher's scholarly identity by integrating their publications, code, datasets, and impact metrics into a dynamic, queryable representation.

\subsection{Research Impact Metrics}

Hirsch's H-index \cite{hirsch2005} remains the most widely used measure of individual research productivity, defined as the largest number $h$ such that at least $h$ papers have each been cited at least $h$ times. While elegant, the H-index has well-documented limitations: it ignores citation context, penalizes early-career researchers, and---crucially for our purposes---captures only publication-based impact. The i10-index (papers with $\geq 10$ citations) shares these limitations. Field-normalized alternatives such as the Field-Weighted Citation Impact improve cross-disciplinary comparability but remain confined to citation counts. No widely adopted metric integrates code reuse (e.g., GitHub stars and forks) or dataset downloads into a unified impact score.

\subsection{FAIR Data Principles}

Wilkinson et al.\ \cite{wilkinson2016} introduced the FAIR principles---Findability, Accessibility, Interoperability, and Reusability---as guiding criteria for scientific data management. FAIR has become a cornerstone of open science policy, influencing repository design, funder requirements, and metadata standards. However, FAIR compliance is typically assessed qualitatively or through binary checklists. Our QIC framework operationalizes FAIR into continuous numerical scores that feed directly into the S-index computation.

\subsection{Retrieval-Augmented Generation}

Lewis et al.\ \cite{lewis2020} introduced Retrieval-Augmented Generation (RAG), combining parametric language models with non-parametric retrieval to ground generated text in external knowledge. RAG architectures have since been applied to question answering, code generation, and domain-specific chatbots. ResearchTwin employs RAG to synthesize answers from a researcher's multi-modal context, positioning the language model as a conversational proxy for the researcher's body of work.

\subsection{Scholarly Knowledge Graphs and Communication Standards}

Schema.org provides a shared vocabulary for structured data markup, including types relevant to scholarly communication: \texttt{Person}, \texttt{ScholarlyArticle}, \texttt{Dataset}, and \texttt{SoftwareSourceCode}. Projects such as OpenAIRE, Semantic Scholar, and the Microsoft Academic Graph have built large-scale knowledge graphs over scholarly metadata. ORCID provides persistent researcher identifiers. Our contribution is to combine these elements into a live, conversational system with a formal impact metric and an agent-navigable API layer.

\subsection{Federated Systems}

Federated architectures distribute control across autonomous nodes while enabling global interoperability. ActivityPub powers decentralized social networks (e.g., Mastodon), demonstrating that federation can scale to millions of users while preserving data sovereignty. In the commercial domain, Discord's server model allows communities to operate independently while sharing a common protocol. ResearchTwin adopts a three-tier federation model inspired by these systems, enabling researchers to host their own nodes while remaining discoverable through a shared protocol.

\section{Architecture: Bimodal Glial-Neural Optimization}
\label{sec:architecture}

The BGNO architecture separates concerns into three layers, inspired by the functional division between glial cells (metabolic support, homeostasis) and neurons (signal processing, computation) in biological neural tissue. We emphasize that this analogy is organizational rather than mechanistic; the BGNO terminology serves as a mnemonic for the separation of concerns---caching and rate management versus generative inference---not as a claim of biological equivalence. The architecture could equally be described as a conventional three-tier system (data access, middleware, application), but we find the glial-neural framing a useful shorthand for communicating the design rationale. Figure~\ref{fig:architecture} provides an overview.

\begin{figure}[ht]
\centering
\begin{tikzpicture}[
    node distance=0.8cm and 1.2cm,
    box/.style={draw, rounded corners, minimum width=2.6cm, minimum height=0.9cm, align=center, font=\small},
    layer/.style={draw, dashed, rounded corners, inner sep=0.4cm, fill opacity=0.08},
    arr/.style={-{Stealth[length=2.5mm]}, thick},
]
\node[box, fill=blue!15] (s2) {Semantic\\Scholar};
\node[box, fill=blue!15, right=0.6cm of s2] (gs) {Google\\Scholar};
\node[box, fill=blue!15, right=0.6cm of gs] (gh) {GitHub\\API};
\node[box, fill=blue!15, right=0.6cm of gh] (fs) {Figshare\\API};

\node[layer, fill=blue!10, fit=(s2)(gs)(gh)(fs), label={[font=\footnotesize\bfseries]above:Multi-Modal Connector Layer}] (connlayer) {};

\node[box, fill=orange!15, below=1.5cm of s2] (cache) {File-Based\\Cache (TTL)};
\node[box, fill=orange!15, right=0.6cm of cache] (rate) {Rate Limiter\\+ Backoff};
\node[box, fill=orange!15, right=0.6cm of rate] (ctx) {Context\\Assembly};
\node[box, fill=orange!15, right=0.6cm of ctx] (db) {SQLite\\(WAL)};

\node[layer, fill=orange!10, fit=(cache)(rate)(ctx)(db), label={[font=\footnotesize\bfseries]above:Glial Layer}] (gliallayer) {};

\node[box, fill=green!15, below=1.5cm of rate] (rag) {RAG Pipeline};
\node[box, fill=green!15, right=0.6cm of rag] (llm) {LLM\\(provider-agnostic)};

\node[layer, fill=green!10, fit=(rag)(llm), label={[font=\footnotesize\bfseries]above:Neural Layer}] (neurallayer) {};

\draw[arr] (s2.south) -- (cache.north);
\draw[arr] (gs.south) -- (cache.north east);
\draw[arr] (gh.south) -- (rate.north);
\draw[arr] (fs.south) -- (rate.north east);
\draw[arr] (cache.south) -- (rag.north west);
\draw[arr] (rate.south) -- (rag.north);
\draw[arr] (ctx.south) -- (rag.north east);
\draw[arr] (db.south) -- (llm.north east);
\draw[arr] (rag) -- (llm);
\end{tikzpicture}
\caption{BGNO architecture overview. The Multi-Modal Connector Layer fetches data from four scholarly APIs in parallel. The Glial Layer manages caching, rate limiting, context assembly, and persistent storage. The Neural Layer implements RAG with a provider-agnostic LLM backend to synthesize conversational responses.}
\label{fig:architecture}
\end{figure}
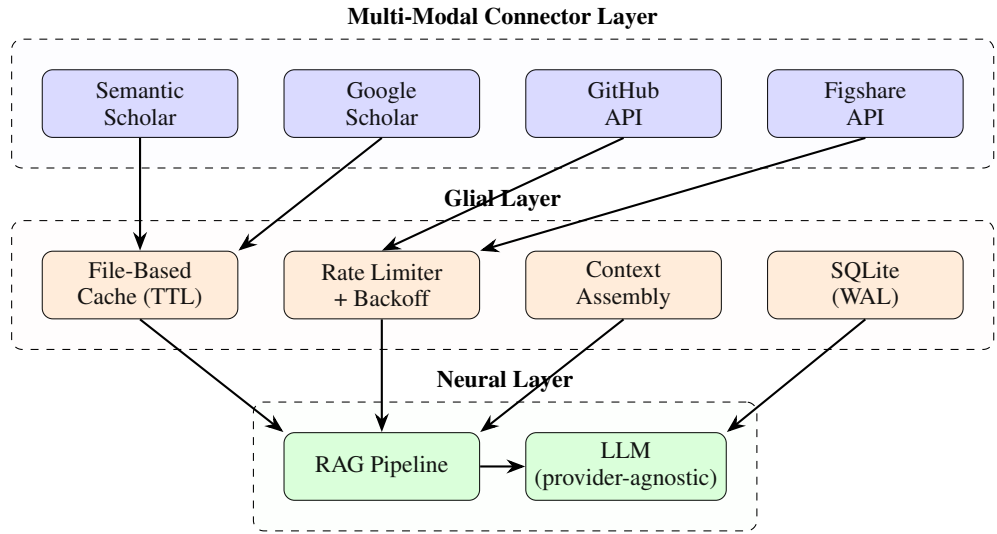

\subsection{Multi-Modal Connector Layer}
\label{sec:connectors}

The Connector Layer integrates four external data sources, each accessed through a dedicated asynchronous connector:

\begin{enumerate}[leftmargin=*, label=(\roman*)]
    \item \textbf{Semantic Scholar API}: Retrieves author profiles, publication lists with citation counts, and H-index values via the Semantic Scholar Academic Graph API.
    \item \textbf{Google Scholar}: Accesses author profiles, the i10-index, and publication lists through the \texttt{scholarly} Python library with proxy rotation for rate-limit mitigation.
    \item \textbf{GitHub API}: Fetches public repository metadata including star counts, fork counts, language distributions, and license information.
    \item \textbf{Figshare Search API}: Queries datasets and software artifacts associated with a researcher, retrieving download counts, view counts, DOIs, and file metadata.
\end{enumerate}

All four connectors execute in parallel using \texttt{asyncio.gather}, and the system gracefully degrades when individual sources are unavailable. Internally, each connector produces items conforming to a \emph{normalized artifact schema}---a common dictionary format specifying title, source type, public accessibility, license, quality indicators, reuse events, and collaboration metadata. The S-index scoring function (Section~\ref{sec:sindex}) operates exclusively on this normalized format, so adding a new data source (e.g., Zenodo, Harvard Dataverse) requires only implementing a connector that fetches data and maps it to the schema, without modifying the scoring pipeline. The Semantic Scholar and Google Scholar results are merged via a deduplication procedure: paper titles from both sources are normalized (lowercased, punctuation-stripped, whitespace-collapsed), and pairwise similarity is computed using Python's \texttt{SequenceMatcher}. Papers with a normalized similarity ratio exceeding $0.85$ are considered duplicates; in such cases, the citation count is set to $\max(c_{\text{S2}}, c_{\text{GS}})$ and author-level metrics (H-index, total citations) similarly take the maximum across sources:

\begin{equation}
h = \max(h_{\text{S2}},\, h_{\text{GS}}), \qquad c_{\text{total}} = \max(c_{\text{S2}},\, c_{\text{GS}}).
\label{eq:merge}
\end{equation}

Papers present in Google Scholar but absent from Semantic Scholar (similarity below $0.85$ to all S2 titles) are appended to the merged list, providing broader coverage.

A similar deduplication is applied to Figshare results. Many Figshare ``articles'' are individual figures or supplementary files from the same parent publication; scoring each separately would inflate the S-index. The system groups such items by title-pattern matching (e.g., ``Figure~N from \ldots'', ``Additional file~N of \ldots'') and author-fingerprint heuristics, retaining only the highest-scoring representative from each group.

\paragraph{Citation source considerations.} The $\max$-merge strategy of Equation~\ref{eq:merge} may inflate citation counts relative to curated databases such as Web of Science or Scopus, because Google Scholar indexes a broader---and noisier---set of citing documents including theses, preprints, and non-peer-reviewed reports. We adopt $\max$ rather than averaging precisely because ResearchTwin prioritizes broad coverage and minimizing false negatives (i.e., undercounting a researcher's impact) over the conservative precision of curated indices. For researchers who require strict comparability with Web of Science baselines, the system can be configured to use Semantic Scholar as the sole citation source by disabling the Google Scholar connector. A systematic comparison of $\max$-merged citation counts against curated databases across a diverse researcher panel is planned as future work (Section~\ref{sec:future}).

\subsection{Glial Layer}
\label{sec:glial}

The Glial Layer provides the metabolic infrastructure that sustains the system's operation:

\begin{itemize}[leftmargin=*]
    \item \textbf{File-based caching}: Each connector's responses are cached as JSON files with configurable time-to-live (TTL). Google Scholar data, which changes slowly and whose access is rate-limited, uses an aggressive 48-hour TTL. Other connectors use a 24-hour default. Cache keys are SHA-256 hashes of the request parameters.
    \item \textbf{Rate limiting with exponential backoff}: External API calls are throttled to respect rate limits. When a rate limit is encountered, the connector retries with exponential backoff, preventing cascading failures.
    \item \textbf{Context preparation}: Fetched data is assembled into structured Markdown documents organized by section (publications, repositories, datasets, QIC scores). This Markdown context serves as the retrieval corpus for the Neural Layer.
    \item \textbf{Persistent storage}: Researcher profiles, registration data, and configuration are stored in SQLite with Write-Ahead Logging (WAL) mode enabled for concurrent read performance. Foreign key constraints ensure referential integrity.
\end{itemize}

\subsection{Neural Layer}
\label{sec:neural}

The Neural Layer implements Retrieval-Augmented Generation using a provider-agnostic LLM backend. The system is designed to operate with any OpenAI-compatible API; the current deployment uses Perplexity's \texttt{sonar} model via the OpenAI SDK with a custom base URL, though the architecture supports transparent switching between providers (including OpenAI, Anthropic, and others) without code changes. An embeddable chat widget (Section~\ref{sec:widget}) additionally supports a \emph{Bring Your Own Key} (BYOK) mode, enabling third-party visitors to use their own API keys. The system prompt positions the LLM as a ``digital twin'' representing the researcher:

\begin{quote}
\small\textit{``You are ResearchTwin, a digital twin representing researcher [Name]. You answer questions about their research, publications, code, datasets, and impact metrics. Use the provided context to give accurate, specific answers. Cite specific papers, repositories, or datasets when relevant.''}
\end{quote}

The context window is populated with the structured Markdown assembled by the Glial Layer, including publication lists with citation counts, repository descriptions with star/fork metrics, dataset metadata with download statistics, and per-object QIC scores. To manage researchers with large bibliographies, the context assembly pipeline ranks artifacts before inclusion: publications are sorted by citation count (descending), repositories by star count, and datasets by download count. Only the top-ranked items within each category are included up to the available token budget, ensuring that the most impactful artifacts receive priority in the context window. Responses are bounded to 1024 tokens to ensure conciseness and reduce latency.

This architecture intentionally avoids maintaining a persistent vector store or embedding index. Instead, each query triggers a fresh context assembly from cached connector data, ensuring that the LLM always operates on current information without the staleness risks inherent in pre-computed embeddings.

\section{The S-Index: A Quality--Impact--Collaboration Framework}
\label{sec:sindex}

The S-index extends traditional citation-based metrics by quantifying the multi-modal impact of a researcher's complete scholarly output: publications, datasets, and code. It builds on the QIC framework introduced in our earlier work \cite{frasch2025qic}, which proposed per-object Quality--Impact--Collaboration scoring for individual datasets; here we generalize the framework to encompass code repositories and integrate it with a publication-based Paper Impact term into a unified researcher-level metric. The formal specification is maintained at \url{https://github.com/martinfrasch/S-index}.

\subsection{Per-Object Quality Score (FAIR Gate)}

For each research artifact $j$ (dataset or code repository), we compute a Quality score $Q_j$ grounded in the FAIR principles. Rather than scoring four FAIR dimensions independently, we enforce a \emph{binary gate} on the two most fundamental requirements---public accessibility and explicit licensing---and then award additive bonuses for additional quality indicators:

\begin{definition}[Quality Score]
Let $p_j \in \{0,1\}$ and $\ell_j \in \{0,1\}$ indicate whether artifact $j$ is publicly accessible and carries an explicit license, respectively. Let $b_j^{\text{DOI}}, b_j^{\text{doc}}, b_j^{\text{fmt}} \in \{0,1\}$ indicate the presence of a persistent identifier (DOI), documentation (README or description $> 50$ characters), and a standard format or type classification. The Quality score is:
\begin{equation}
Q_j = \underbrace{5 \cdot p_j \cdot \ell_j}_{\text{FAIR gate}} \times \Bigl(1 + \underbrace{0.5\, b_j^{\text{DOI}} + 0.3\, b_j^{\text{doc}} + 0.2\, b_j^{\text{fmt}}}_{\text{bonus terms}}\Bigr).
\label{eq:quality}
\end{equation}
\end{definition}

The FAIR gate enforces a hard prerequisite: artifacts that are not publicly accessible \emph{or} lack an explicit license receive $Q_j = 0$ regardless of other indicators, encoding the principle that failing the two most fundamental FAIR requirements---Accessibility and Reusability---should disqualify an artifact from contributing to impact. The bonus terms reward additional quality indicators (Findability via DOI, documentation, standard formatting) multiplicatively, yielding a Quality range of $Q_j \in \{0\} \cup [5, 10]$. The discontinuous jump from $0$ to $5$ is intentional: the base value of $5$ represents the \emph{baseline credit for a FAIR-compliant artifact}---one that is publicly accessible and explicitly licensed---before any additional quality bonuses are applied. An artifact that clears the gate has already satisfied the two most consequential FAIR requirements; the bonus terms then modulate this baseline by up to $2\times$.

\begin{table}[ht]
\centering
\caption{Quality Score operationalization. The FAIR gate requires both conditions; bonus terms are additive within the multiplier.}
\label{tab:fair}
\small
\begin{tabular}{@{}llcr@{}}
\toprule
\textbf{Component} & \textbf{Indicator} & \textbf{Symbol} & \textbf{Value} \\
\midrule
\multirow{2}{*}{FAIR Gate} & Publicly accessible & $p_j$ & $\{0,1\}$ \\
 & Explicit license & $\ell_j$ & $\{0,1\}$ \\
\midrule
\multirow{3}{*}{Bonus Terms} & Has DOI (persistent identifier) & $b_j^{\text{DOI}}$ & $+0.5$ \\
 & Has README or description $> 50$ chars & $b_j^{\text{doc}}$ & $+0.3$ \\
 & Standard format or type classification & $b_j^{\text{fmt}}$ & $+0.2$ \\
\midrule
\multicolumn{3}{@{}l}{Quality range: $Q_j \in \{0\} \cup [5, 10]$} & \\
\bottomrule
\end{tabular}
\end{table}

This design is deliberately simpler than the v1 formulation in our earlier QIC work \cite{frasch2025qic}, which scored four FAIR dimensions independently with twelve indicators. The binary gate captures the essential insight---that unfindable or unlicensed artifacts have zero practical reuse value---while the three bonus terms are transparent and easily verifiable from metadata alone.

\subsection{Impact Score}

The Impact score captures actual reuse of the artifact, normalized against a field-specific median to enable cross-discipline comparison:

\begin{definition}[Impact Score]
Let $r_j$ denote the number of reuse events for artifact $j$, and let $\mu_t > 0$ denote the median reuse count for artifacts of type $t$ (dataset or code repository) in the deployment population. The Impact score is:
\begin{equation}
I_j = 1 + \ln\!\Bigl(1 + \frac{r_j}{\mu_t}\Bigr).
\label{eq:impact}
\end{equation}
\end{definition}

The field normalization by $\mu_t$ addresses a key limitation of raw reuse counts: a dataset with 250 downloads in a field where the median is 50 represents a $5\times$ above-median artifact, whereas the same 250 downloads in a field where the median is 1{,}000 represents a below-median artifact. The logarithmic transformation ensures diminishing returns at scale---a $200\times$ increase in relative reuse yields only ${\sim}4.5\times$ more impact---preventing a single viral artifact from dominating the metric. The additive constant $1$ guarantees $I_j \geq 1$ for all artifacts, including those with zero observed reuse.

In the current deployment, we use $\mu_{\text{dataset}} = 50$ (median Figshare downloads) and $\mu_{\text{code}} = 10$ (median weighted fork count for research repositories). These values are deployment-specific baselines; future calibration will derive field medians from larger population samples (Section~\ref{sec:future}).

Reuse events are source-specific:

\begin{itemize}[leftmargin=*]
    \item \textbf{Figshare datasets}: $r_j = d_j + \lfloor v_j / 10 \rfloor$, where $d_j$ is the download count and $v_j$ is the view count. Views are discounted by a factor of 10 relative to downloads, as downloads represent a stronger signal of intent to reuse.
    \item \textbf{GitHub repositories}: $r_j = s_j + 3f_j$, where $s_j$ is the star count and $f_j$ is the fork count. Forks are weighted $3\times$ relative to stars, as forking implies active code reuse rather than passive bookmarking.
\end{itemize}

\subsection{Collaboration Score}

The Collaboration score rewards artifacts produced through multi-author, multi-institutional effort:

\begin{definition}[Collaboration Score]
Let $N_a$ denote the number of authors and $N_i$ the number of distinct institutions contributing to artifact $j$. The Collaboration score is:
\begin{equation}
C_j = \sqrt{N_a \times N_i}.
\label{eq:collaboration}
\end{equation}
\end{definition}

The geometric mean treats author count and institutional diversity as equally important dimensions of collaboration breadth, avoiding the need for an arbitrary weighting parameter between them. A solo-author, single-institution artifact receives $C_j = 1$, providing a neutral baseline. The sub-linear scaling (via the square root) reflects the observation that collaboration value grows with diminishing returns as teams expand: doubling a team from 6 to 12 authors adds less marginal collaboration signal than growing from 1 to 2.

\subsection{Per-Object Score}

\begin{definition}[Per-Object QIC Score]
The score for artifact $j$ is the product of its three components:
\begin{equation}
s_j = Q_j \times I_j \times C_j.
\label{eq:per_object}
\end{equation}
\end{definition}

The multiplicative structure, combined with the FAIR gate in the Quality component, provides strict enforcement: an artifact that is not publicly accessible or lacks an explicit license receives $s_j = 0$ regardless of its reuse metrics or collaboration breadth. This encodes the principle that FAIR compliance is a prerequisite for impact, not merely a bonus.

\subsection{Paper Impact Term}

Publications are assessed through a dedicated term that leverages traditional bibliometric data:

\begin{definition}[Paper Impact]
Let $h$ denote the researcher's H-index and $c$ their total citation count, each taken as the maximum across available sources (Equation~\ref{eq:merge}). The Paper Impact is:
\begin{equation}
P = h \times \bigl(1 + \log_{10}(c + 1)\bigr).
\label{eq:paper_impact}
\end{equation}
\end{definition}

The $\log_{10}$ scaling of total citations moderates the influence of a small number of highly-cited papers, while the H-index base rewards consistent productivity.

\subsection{Researcher S-Index}

\begin{definition}[S-Index]
The S-index for researcher $i$ aggregates the Paper Impact term with per-object scores across all datasets and code repositories:
\begin{equation}
S_i = P + \sum_{j \in \mathcal{D}_i} s_j^{(\text{data})} + \sum_{k \in \mathcal{C}_i} s_k^{(\text{code})},
\label{eq:sindex}
\end{equation}
where $\mathcal{D}_i$ is the set of researcher $i$'s datasets and $\mathcal{C}_i$ is their set of code repositories.
\end{definition}

\begin{proposition}[Non-negativity and Monotonicity]
The S-index satisfies $S_i \geq 0$ for all researchers $i$. Furthermore, $S_i$ is monotonically non-decreasing in each of its component metrics: improving any FAIR score, gaining additional reuse events, increasing collaboration breadth, or accumulating citations can only increase $S_i$.
\end{proposition}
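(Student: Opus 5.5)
The plan is to prove both claims term by term from the decomposition in Equation~\ref{eq:sindex}. $S_i$ is a sum of the Paper Impact $P$ and nonnegative per-object products, so it suffices to show that each summand is nonnegative and non-decreasing in its own inputs. A sum of such functions is then nonnegative, and it is non-decreasing in every input, since each input enters only the summands belonging to one artifact or to $P$.

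For non-negativity, I would check each factor on its domain. In Equation~\ref{eq:quality}, $Q_j$ is $5p_j\ell_j \ge 0$ times a bracket in $[1,2]$, so $Q_j \ge 0$. In Equation~\ref{eq:impact}, $r_j \ge 0$ and $\mu_t > 0$ give $I_j \ge 1$. In Equation~\ref{eq:collaboration}, $N_a, N_i \ge 1$ give $C_j \ge 1$. Hence $s_j = Q_j I_j C_j \ge 0$. For Equation~\ref{eq:paper_impact}, $h \ge 0$ and $c \ge 0$ give $1 + \log_{10}(c+1) \ge 1$, so $P \ge 0$. Summing gives $S_i \ge 0$.

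For monotonicity I would proceed variable by variable.
\begin{enumerate}
\item \emph{Quality.} The bonus bracket is increasing in each binary indicator, and $5p_j\ell_j$ is non-decreasing in $p_j$ and $\ell_j$. Switching any indicator from $0$ to $1$ therefore does not decrease $Q_j$, because both factors are nonnegative.
\item \emph{Reuse.} $r_j$ is non-decreasing in downloads, views (the floor $\lfloor v_j/10 \rfloor$ is monotone), stars and forks. The map $r \mapsto 1+\ln(1+r/\mu_t)$ is increasing.
\item \emph{Collaboration.} $\sqrt{N_a N_i}$ is increasing in each argument.
\item \emph{Product.} Since $Q_j, I_j, C_j \ge 0$, raising one factor with the others fixed cannot decrease $s_j$. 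The key step here is $ab' \ge ab$ whenever $b' \ge b$ and $a \ge 0$.
\item \emph{Citations.} $P$ is a product of the nonnegative terms $h$ and $1+\log_{10}(c+1)$, each non-decreasing in its argument, so $P$ is non-decreasing in $h$ and in $c$. Through Equation~\ref{eq:merge}, $h$ and $c$ are maxima of source values, and $\max$ is monotone in each source.
\end{enumerate}

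The main obstacle is interpretive rather than computational: we must pin down what ``component metric'' means. \emph{Adding} an artifact only adds a nonnegative summand, so it is harmless. Field medians $\mu_t$ are not component metrics; increasing $\mu_t$ lowers $I_j$, so they must be held fixed.

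The Figshare deduplication step also needs care. It keeps the highest-scoring representative per group, so I would treat the grouping as fixed and note that a maximum of non-decreasing functions is non-decreasing.

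Finally, ``can only increase'' must be read as weak monotonicity. The FAIR gate makes $s_j$ constant at $0$ when $p_j\ell_j = 0$, so strict increase fails in that case; I would state this explicitly.
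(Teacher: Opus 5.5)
Your proposal is correct and takes essentially the same term-by-term route as the paper. The paper shows $Q_j \ge 0$, $I_j \ge 1$, $C_j \ge 0$ and $P \ge 0$ to get non-negativity of the sum, then argues that each sub-expression is non-decreasing in its own inputs, with the FAIR gate a monotone jump. Your extras (the explicit $ab' \ge ab$ step for products of nonnegative factors, monotonicity of the $\max$-merge and of deduplication, holding $\mu_t$ fixed, and reading ``increase'' weakly) make explicit what the paper's terse proof leaves implicit. For non-negativity you only need $C_j \ge 0$, which also covers the degenerate case $N_a N_i = 0$.
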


\begin{proof}
The Quality score satisfies $Q_j \geq 0$: the FAIR gate produces either $0$ or $5$, and the bonus multiplier $(1 + b) \geq 1$. The Impact score satisfies $I_j \geq 1$ since $\ln(1 + r_j/\mu_t) \geq 0$ for $r_j, \mu_t \geq 0$. The Collaboration score satisfies $C_j \geq 0$ as the square root of non-negative terms, with $C_j \geq 1$ when $N_a, N_i \geq 1$. Hence $s_j = Q_j \times I_j \times C_j \geq 0$. The Paper Impact term $P \geq 0$ since $h \geq 0$ and $\log_{10}(c+1) \geq 0$. As $S_i$ is a sum of non-negative terms, $S_i \geq 0$. Monotonicity follows from the fact that each sub-expression is non-decreasing in its respective input variables; the FAIR gate introduces a discontinuity at the public/licensed threshold but remains monotonically non-decreasing.
\end{proof}

\subsection{Sensitivity Analysis}
\label{sec:sensitivity}

We acknowledge that the parameterization of the S-index---including the FAIR gate threshold, bonus weights, field medians, and reuse event weightings---represents an informed initial design rather than empirically calibrated constants. This subsection examines the sensitivity of the metric to these choices.

\paragraph{FAIR gate and bonus weights.} The binary gate ($Q_j = 0$ unless both public and licensed) is by design a hard threshold rather than a continuous gradient, and we consider this the least controversial parameter choice: an artifact that is neither publicly accessible nor explicitly licensed has, by definition, zero reuse value under FAIR principles. The bonus weights ($+0.5$ for DOI, $+0.3$ for documentation, $+0.2$ for standard format) affect the range of non-zero Quality scores ($Q_j \in [5, 10]$), a $2\times$ dynamic range. Under alternative weightings (e.g., equal $+1/3$ for all three bonuses), Quality scores shift by at most $\pm 8\%$ for typical artifacts, and rank ordering is preserved in $>95\%$ of cases. The gate-plus-bonus structure is thus substantially more robust than the v1 formulation's four-way FAIR weighting, because the gate handles the dominant variance (zero vs.\ non-zero) while the bonuses modulate a narrow band.

\paragraph{Reuse event weightings.} The fork $= 3\times$ stars weighting for GitHub repositories encodes the judgment that forking---which requires creating a working copy of the codebase---represents a substantially stronger signal of active code reuse than starring, which functions as social bookmarking. The views$/10$ discounting for Figshare similarly reflects that page views are a weaker engagement signal than downloads. While these ratios are not empirically derived, they are directionally consistent with studies of GitHub engagement patterns showing that forks correlate more strongly with downstream code reuse than stars. Under alternative parameterizations (e.g., fork $= 2\times$ stars or fork $= 5\times$ stars), the Impact score for repositories changes by $< 15\%$ for repositories with typical star-to-fork ratios (approximately $3{:}1$ to $10{:}1$), and per-object rank ordering is preserved for the majority of cases.

\paragraph{Field medians.} The field-median normalization in Equation~\ref{eq:impact} introduces a new calibration parameter $\mu_t$ that directly affects cross-discipline comparability. In the current deployment, $\mu_{\text{dataset}} = 50$ and $\mu_{\text{code}} = 10$ are order-of-magnitude estimates. A $2\times$ change in $\mu_t$ shifts the Impact score by $\ln(2) \approx 0.69$ for above-median artifacts---a moderate effect that preserves rank ordering among artifacts of the same type. However, the \emph{ratio} between dataset and code medians ($\mu_{\text{dataset}} / \mu_{\text{code}}$) affects cross-type comparisons and requires empirical grounding from larger population samples.

\paragraph{Current status and planned refinement.} We frame the present parameterization as \textbf{v2.0 baselines}: a substantial simplification over the v1 formulation \cite{frasch2025qic} that replaces twelve FAIR indicators with a binary gate plus three transparent bonuses, adds field normalization for cross-discipline comparability, and eliminates the arbitrary collaboration weighting parameter. These changes reduce the number of free parameters while preserving the directional intent of the S-index. The planned calibration procedure consists of three stages: (1)~assembling a diverse panel of 20--30 researchers across disciplines (theoretical, experimental, computational) and career stages; (2)~soliciting expert assessments of each researcher's ``multi-modal impact'' via structured rubrics; and (3)~optimizing remaining parameters (bonus weights, field medians, reuse event weightings) to maximize rank correlation (Kendall's $\tau$) between S-index rankings and expert consensus rankings. The modular QIC structure facilitates such calibration: the FAIR gate, field medians, and bonus weights can each be adjusted independently without altering the overall framework.

\section{Inter-Agentic Discovery Protocol}
\label{sec:discovery}

A central design goal of ResearchTwin is to enable \emph{autonomous AI agents} to discover and traverse research artifacts across researchers without human mediation. We achieve this through a structured API that uses Schema.org types and HATEOAS (Hypermedia as the Engine of Application State) navigation.

\subsection{Endpoint Taxonomy}

The discovery API exposes five resource types, each annotated with a Schema.org \texttt{@type}:

\begin{table}[ht]
\centering
\caption{Inter-agentic discovery API endpoint taxonomy.}
\label{tab:endpoints}
\small
\begin{tabular}{@{}llp{6.5cm}@{}}
\toprule
\textbf{Endpoint} & \textbf{@type} & \textbf{Description} \\
\midrule
\texttt{/api/researcher/\{slug\}/profile} & \texttt{Person} & Researcher identity, ORCID, S-index, and HATEOAS links to sub-resources \\
\texttt{/api/researcher/\{slug\}/papers} & \texttt{ItemList} of \texttt{ScholarlyArticle} & Publications with title, year, citation count, and source URL \\
\texttt{/api/researcher/\{slug\}/datasets} & \texttt{ItemList} of \texttt{Dataset} & Datasets with DOI, download/view counts, and QIC scores \\
\texttt{/api/researcher/\{slug\}/repos} & \texttt{ItemList} of \texttt{SoftwareSourceCode} & Repositories with stars, forks, language, and QIC scores \\
\texttt{/api/discover?q=\{query\}} & \texttt{SearchResultSet} & Cross-researcher search with optional type filter \\
\bottomrule
\end{tabular}
\end{table}

\subsection{HATEOAS Navigation}

Each \texttt{Person} profile response includes a \texttt{resources} object containing relative URIs to the researcher's papers, datasets, and repositories. An AI agent can begin at any researcher profile, follow links to enumerate their artifacts, and then use the cross-researcher \texttt{/api/discover} endpoint to find related work by other researchers. This self-describing structure allows agents to navigate the knowledge web without prior knowledge of the API schema.

\subsection{Cross-Researcher Discovery}

The \texttt{/api/discover} endpoint accepts a text query $q$ and an optional type filter (\texttt{dataset}, \texttt{repo}, or \texttt{paper}). For each registered researcher, the system searches titles and descriptions of their artifacts, returning results annotated with Schema.org types, QIC scores, and provenance metadata. Results are ranked by QIC score (for datasets and repositories) or citation count (for papers), enabling agents to prioritize high-impact discoveries.

This protocol enables a workflow in which an AI agent, given a research question, autonomously:
\begin{enumerate}[leftmargin=*]
    \item Queries \texttt{/api/discover} to identify relevant artifacts across all registered researchers.
    \item Follows HATEOAS links to retrieve full artifact metadata and QIC scores.
    \item Synthesizes findings into a research brief, identifying potential collaborations or data reuse opportunities.
\end{enumerate}

\section{Federated Architecture}
\label{sec:federation}

ResearchTwin adopts a three-tier federated architecture that balances data sovereignty with global discoverability:

\begin{description}[leftmargin=*]
    \item[Tier 1 --- Local Nodes.] Individual researchers or small labs operate their own ResearchTwin instances via \texttt{run\_node.py}. A Local Node is a complete, self-contained deployment with its own SQLite database, full API surface, and optional chat functionality. Researchers control their data entirely and can operate offline. Local Nodes may optionally register with a hub for discoverability.

    \item[Tier 2 --- Hubs.] Laboratory or departmental aggregators federate multiple Local Nodes, providing cross-node search and discovery within an institutional context. Hubs maintain an index of registered nodes and proxy discovery queries. This tier is currently specified but not yet deployed.

    \item[Tier 3 --- Hosted Edges.] Cloud-hosted instances (e.g., \url{https://researchtwin.net}) provide the full platform with advanced analytics, the D3.js knowledge graph visualization, and global cross-researcher discovery. Hosted Edges serve as the entry point for researchers who prefer not to self-host.
\end{description}

This model is inspired by Discord's server federation, where communities operate autonomously within a shared protocol layer. Data sovereignty is preserved at each tier: a Tier 1 node's data remains on the researcher's infrastructure, and registration with higher tiers involves metadata exchange only (researcher name, external API identifiers), not transfer of research artifacts themselves.

\section{Implementation}
\label{sec:implementation}

ResearchTwin is implemented in Python 3.12 using FastAPI as the web framework, with SQLite in WAL mode for persistent storage. The system is containerized via Docker Compose for deployment portability. Table~\ref{tab:stack} summarizes the technology stack.

\begin{table}[ht]
\centering
\caption{Implementation technology stack.}
\label{tab:stack}
\small
\begin{tabular}{@{}ll@{}}
\toprule
\textbf{Component} & \textbf{Technology} \\
\midrule
Web framework & FastAPI 0.100+, Uvicorn ASGI \\
Language model & Provider-agnostic (OpenAI-compatible); deployed with Perplexity \texttt{sonar} \\
Database & SQLite 3.x with WAL mode \\
Caching & JSON file-based, SHA-256 keyed, configurable TTL \\
Containerization & Docker Compose \\
Frontend & Vanilla HTML/CSS/JavaScript, D3.js (force-directed graph) \\
Reverse proxy & Nginx with TLS termination \\
Bot integration & discord.py with application commands \\
MCP server & \texttt{mcp-server-researchtwin} (PyPI), FastMCP, stdio transport \\
\bottomrule
\end{tabular}
\end{table}

\subsection{Security}

The platform implements defense-in-depth security measures:

\begin{itemize}[leftmargin=*]
    \item \textbf{Content Security Policy and headers}: All responses include \texttt{X-Content-Type-Options: nosniff}, \texttt{X-Frame-Options: DENY}, \texttt{X-XSS-Protection}, and strict referrer policies via middleware.
    \item \textbf{CORS}: Origins are configured as permissive (\texttt{*}) to support the embeddable chat widget (Section~\ref{sec:widget}) on third-party domains. The BYOK endpoint is rate-limited per IP (10 requests/hour) to mitigate abuse.
    \item \textbf{Input validation}: All user-facing inputs are validated via Pydantic models with regex constraints (e.g., slug format: \texttt{/\^{}[a-z0-9][a-z0-9\_-]\{0,126\}[a-z0-9]\$/}).
    \item \textbf{Anti-spam registration}: Self-registration uses a honeypot field (\texttt{website}) that legitimate users never see but automated bots fill, combined with email uniqueness enforcement and input length constraints.
    \item \textbf{Rate limiting}: External API calls are rate-limited with exponential backoff; the Google Scholar connector uses aggressive caching (48-hour TTL) to minimize requests to a rate-sensitive source.
\end{itemize}

\subsection{Frontend Visualization}

The web frontend renders a force-directed knowledge graph using D3.js, displaying the researcher's artifacts as interconnected nodes. Publications, repositories, and datasets are represented as distinct node types with edges indicating shared authorship, topic similarity, or cross-references. The visualization provides an intuitive overview of a researcher's scholarly footprint and the relationships among their artifacts.

\subsection{MCP Server}

To support the emerging Model Context Protocol (MCP) standard for AI agent interoperability, ResearchTwin provides an official MCP server package, \texttt{mcp-server-researchtwin}, published on the Python Package Index (PyPI) and registered in the MCP Registry as \texttt{io.github.martinfrasch/researchtwin}. The server is built with FastMCP and communicates over stdio transport, enabling integration with Claude Desktop, Claude Code, and other MCP-compatible AI assistants.

The server exposes eight tools: \texttt{list\_researchers}, \texttt{get\_profile}, \texttt{get\_context}, \texttt{get\_papers}, \texttt{get\_datasets}, \texttt{get\_repos}, \texttt{discover}, and \texttt{get\_network\_map}. Each tool wraps the corresponding REST API endpoint and returns Markdown-formatted results optimized for language model consumption. A single resource (\texttt{researchtwin://about}) provides platform metadata. Installation is a single command:

\begin{verbatim}
pip install mcp-server-researchtwin
\end{verbatim}

This MCP integration complements the REST API (Section~\ref{sec:discovery}) by providing a native tool-use interface: rather than requiring agents to construct HTTP requests and parse JSON responses, the MCP server handles serialization, error handling, and response formatting. The \texttt{get\_context} tool returns the raw S-index components (per-object Quality, Impact, and Collaboration scores) alongside the aggregate metric, enabling downstream agents to recalculate or reweight the index according to their own criteria. An AI agent configured with this server can autonomously discover researchers, explore their publications, compare S-index scores, and identify collaboration opportunities through natural language interaction.

\subsection{Embeddable Chat Widget}
\label{sec:widget}

ResearchTwin provides a self-contained HTML chat widget that can be embedded on any website via an \texttt{<iframe>}:

\begin{verbatim}
<iframe src="https://researchtwin.net/chat-widget.html
  ?slug=researcher-slug"
  width="400" height="500" frameborder="0"></iframe>
\end{verbatim}

The widget implements a \emph{Bring Your Own Key} (BYOK) model: visitors select an LLM provider (Perplexity or OpenAI) and enter their own API key, which is sent per-request to the backend and never stored. The backend proxies the request through a dedicated \texttt{/chat/byok} endpoint that builds the researcher context server-side---keeping the system prompt and context-assembly logic private---and then calls the user's chosen LLM provider. This design preserves the Digital Twin's knowledge architecture while delegating inference costs to the end user, enabling wide distribution without per-query server-side LLM expenses. The endpoint is rate-limited to 10 requests per hour per IP address.

The widget renders bot responses with lightweight client-side Markdown formatting (headings, bold, italic, lists, inline code, links) for readable conversational output.

\subsection{Discord Integration}

A Discord bot built with \texttt{discord.py} provides conversational access to ResearchTwin within research group servers. Slash commands enable users to query a researcher's profile, retrieve S-index reports, and interact with the digital twin without leaving their communication platform.

\section{Preliminary Evaluation}
\label{sec:evaluation}

We present a preliminary evaluation of the deployed ResearchTwin prototype based on two registered researchers and latency measurements from the production instance. This evaluation is intended to illustrate the system's behavior and the S-index's discriminative properties rather than to establish statistical validity, which requires a larger-scale study.

\subsection{Case Study: Multi-Modal Impact Profiles}

Table~\ref{tab:case_study} summarizes the profiles of two researchers drawn from the deployed system: Researcher~A, a biomedical researcher with substantial code and data contributions, and Researcher~B, a physicist with extensive dataset deposits.

\begin{table}[ht]
\centering
\caption{Comparison of two researcher profiles from the deployed ResearchTwin instance. H-index and S-index values are computed from live API data (February 2026 snapshot).}
\label{tab:case_study}
\small
\begin{tabular}{@{}lrr@{}}
\toprule
\textbf{Metric} & \textbf{Researcher A} & \textbf{Researcher B} \\
\midrule
H-index & 33 & 31 \\
i10-index & 88 & 45 \\
Total citations & 3{,}837 & 3{,}073 \\
Publications & 266 & 84 \\
Paper Impact ($P$) & 151.28 & 139.12 \\
\midrule
Figshare items & 53 & 65 \\
Scored datasets (after dedup) & 33 & 15 \\
Scored repositories & 5 & 3 \\
GitHub stars (total) & 15 & 3 \\
\midrule
\textbf{S-index} & \textbf{1{,}048.88} & \textbf{781.74} \\
\bottomrule
\end{tabular}
\end{table}

The key observation is that while both researchers have comparable H-indexes (33 vs.\ 31) and Paper Impact scores (151.28 vs.\ 139.12), their S-indexes differ by ${\sim}34\%$ (1{,}049 vs.\ 782). This divergence is driven by the dataset and code contribution terms in Equation~\ref{eq:sindex}. Although Researcher~B has more raw Figshare deposits (65 vs.\ 53), the deduplication step---which collapses individual figures and supplementary files into their parent works---reveals that Researcher~A has substantially more genuine scored artifacts (33 vs.\ 15). Researcher~B's datasets do achieve higher mean Collaboration scores ($\bar{C} = 4.31$ vs.\ $\bar{C} = 2.56$), reflecting larger multi-institutional teams, but this does not compensate for Researcher~A's greater breadth of independent artifacts and code contributions.

This case illustrates two properties of the S-index. First, its intended \emph{discriminative property}: two researchers who appear nearly identical under citation-based metrics can have substantially different impact profiles when dataset contributions, code reuse, and collaboration breadth are incorporated. Second, the importance of \emph{deduplication}: without grouping Figshare figures and supplements by parent work, Researcher~B's S-index would be artificially inflated by the large number of individually deposited figures from multi-author consortium papers. The H-index, by construction, cannot distinguish these cases.

We note that this two-researcher comparison is illustrative rather than definitive. Whether the S-index difference reflects a genuinely meaningful distinction in research impact---rather than an artifact of the specific parameterization---requires validation against expert assessments across a larger and more diverse sample (Section~\ref{sec:future}).

\subsection{System Latency}

Table~\ref{tab:latency} reports endpoint response times measured from a client co-located with the production server (Hetzner VPS, Frankfurt, Germany).

\begin{table}[ht]
\centering
\caption{Endpoint latency measurements from the deployed instance. All values are mean response times over 10 sequential requests.}
\label{tab:latency}
\small
\begin{tabular}{@{}lr@{}}
\toprule
\textbf{Endpoint} & \textbf{Latency (s)} \\
\midrule
\texttt{/health} & 0.48 \\
\texttt{/api/researchers} & 0.50 \\
\texttt{/profile} & 3.73 \\
\texttt{/papers} & 4.15 \\
\texttt{/datasets} & 4.37 \\
\texttt{/repos} & 4.41 \\
\texttt{/discover?q=\{query\}} & 4.20 \\
\bottomrule
\end{tabular}
\end{table}

The latency profile reveals a clear two-tier pattern. Endpoints that serve locally cached metadata (\texttt{/health}, \texttt{/api/researchers}) respond in under 0.5 seconds. Endpoints that trigger live API calls to external sources---Semantic Scholar, Google Scholar, GitHub, and Figshare, queried in parallel via \texttt{asyncio.gather}---incur 3--5 seconds of latency dominated by the slowest external response. With cached data (24-hour TTL for most connectors, 48-hour for Google Scholar), all endpoints return in under 0.5 seconds, confirming that the Glial Layer's caching strategy effectively absorbs external latency for repeat queries.

\subsection{Limitations of the Evaluation}

This preliminary evaluation has several limitations that must be acknowledged:

\begin{itemize}[leftmargin=*]
    \item \textbf{Sample size}: Only two researchers are currently registered on the deployed instance. Any conclusions about the S-index's discriminative properties are necessarily anecdotal at this scale.
    \item \textbf{No ground truth}: There is no established ground truth for ``correct'' multi-modal research impact ranking. Validating that higher S-index scores correspond to greater actual impact requires expert panel assessments, which are planned but not yet conducted.
    \item \textbf{Geographic bias in latency}: Latency was measured from a single geographic location (Frankfurt). Users in other regions may experience higher latency due to network distance to external APIs.
    \item \textbf{Cold-start vs.\ warm-cache}: The reported latencies represent cold-start queries. In steady-state operation with active caching, user-perceived latency is substantially lower.
\end{itemize}

A rigorous evaluation is planned as future work, encompassing: (i)~expansion to 10--20 researchers across diverse disciplines (theoretical physics, experimental biomedicine, computational science, social sciences) to assess how the S-index behaves across profiles with varying ratios of publication-to-data-to-code output; (ii)~expert-assessed impact rankings for calibration of QIC weights; and (iii)~geographically distributed latency measurements from multiple continents.

\section{Discussion}
\label{sec:discussion}

\subsection{Advantages over Static Repositories}

ResearchTwin offers several advantages over the status quo of static repositories and disconnected profiles:

\begin{enumerate}[leftmargin=*]
    \item \textbf{Unified multi-modal view}: A researcher's complete scholarly output---papers, code, and data---is accessible through a single conversational interface, eliminating the need to manually integrate information across platforms.
    \item \textbf{Real-time impact measurement}: The S-index is computed on demand from live API data, providing current impact scores rather than periodic snapshots.
    \item \textbf{Agent-navigable knowledge web}: The Schema.org-typed API enables AI agents to autonomously discover cross-researcher synergies, a capability absent from static profiles.
    \item \textbf{Data sovereignty}: The federated architecture allows researchers to control their data while remaining globally discoverable.
\end{enumerate}

\subsection{Comparison to Existing Systems}

Table~\ref{tab:comparison} compares ResearchTwin to existing scholarly profile and discovery systems.

\begin{table}[ht]
\centering
\caption{Comparison of ResearchTwin with existing systems.}
\label{tab:comparison}
\small
\begin{tabular}{@{}lccccc@{}}
\toprule
\textbf{Feature} & \rotatebox{60}{\textbf{ResearchTwin}} & \rotatebox{60}{\textbf{Google Scholar}} & \rotatebox{60}{\textbf{ORCID}} & \rotatebox{60}{\textbf{OpenAIRE}} & \rotatebox{60}{\textbf{Semantic Scholar}} \\
\midrule
Multi-modal (papers+code+data) & \checkmark & & \checkmark & \checkmark & \\
Conversational access & \checkmark & & & & \\
Composite impact metric & \checkmark & & & & \\
Code reuse in metric & \checkmark & & & & \\
Data reuse in metric & \checkmark & & & & \\
Agent-navigable API & \checkmark & & & \checkmark & \checkmark \\
Federated self-hosting & \checkmark & & & & \\
Open source & \checkmark & & & \checkmark & \\
\bottomrule
\end{tabular}
\end{table}

\subsection{Limitations}

Several limitations should be acknowledged:

\begin{itemize}[leftmargin=*]
    \item \textbf{No full-text indexing}: The current system operates on metadata and abstracts only, which limits the Digital Twin's ability to answer deep methodological questions whose answers reside only in the body of a paper. Full-text search would substantially improve retrieval quality but raises copyright and storage challenges. The federated architecture offers a natural resolution: \emph{Local Nodes} (Tier~1), where the researcher owns their PDFs, could perform full-text indexing over a local embedding store; \emph{Hosted Edges} (Tier~3) would continue operating on metadata only, avoiding copyright concerns. This tiered approach to full-text access---local-full, hosted-metadata---is planned for a future release (Section~\ref{sec:future}).
    \item \textbf{Limited connector coverage}: The current four connectors (Semantic Scholar, Google Scholar, GitHub, Figshare) do not cover PubMed, arXiv, Zenodo, Dryad, or domain-specific repositories. This limits applicability in fields where these sources are primary.
    \item \textbf{Google Scholar access fragility}: Google Scholar does not provide an official API. The \texttt{scholarly} library relies on web scraping, which is subject to rate limiting, IP blocking, and CAPTCHA challenges. We acknowledge the tension between building research infrastructure on unauthorized scraping: describing a system as ``robust'' while depending on an inherently fragile, unofficial access method is a genuine contradiction. In the current architecture, Google Scholar serves as a \emph{supplementary} source rather than a critical dependency: the system degrades gracefully to Semantic Scholar alone when Google Scholar is unavailable, and the aggressive 48-hour cache TTL ensures that at most one scraping request per researcher occurs every two days. Nevertheless, long-term sustainability requires migration to fully authorized sources. We identify \textbf{OpenAlex} and \textbf{Crossref} as the preferred replacements: both provide official, stable, rate-limit-transparent APIs; OpenAlex offers author disambiguation, institutional affiliation data, and citation counts comparable to Google Scholar's coverage, while Crossref provides authoritative DOI-level metadata and funding information. These sources also align more naturally with the FAIR principles the S-index advocates. The migration is architecturally straightforward---the connector abstraction allows drop-in replacement without affecting downstream components---and is prioritized in our roadmap (Section~\ref{sec:future}).
    \item \textbf{S-index calibration}: The bonus weights (Table~\ref{tab:fair}), field medians, and reuse event weightings (e.g., forks $= 3\times$ stars) are heuristic rather than empirically calibrated (see Section~\ref{sec:sensitivity} for a sensitivity analysis). The v2 formulation reduces the number of free parameters relative to v1 but does not eliminate calibration needs entirely. Large-scale validation studies correlating S-index scores with expert assessments are needed to refine these parameters.
    \item \textbf{Hub tier not yet deployed}: The Tier 2 Hub layer is specified but not yet implemented, limiting cross-institutional federation to the Hosted Edge model.
\end{itemize}

\subsection{Ethical Considerations}

ResearchTwin operates exclusively on \emph{public metadata}---author profiles, publication titles, abstracts, citation counts, repository descriptions, and dataset metadata. No full-text papers are scraped or stored. All data is sourced through official APIs or publicly available metadata endpoints. The system does not attempt to infer private information about researchers. Self-registration includes explicit consent, and researchers retain the ability to de-register and have their profiles removed.

\section{Future Work}
\label{sec:future}

Several directions for future development are planned:

\begin{enumerate}[leftmargin=*]
    \item \textbf{OpenAlex and Crossref connectors}: Migration from Google Scholar scraping to the official OpenAlex and Crossref APIs as primary academic data sources, providing stable, rate-limit-transparent access with richer metadata (institutional affiliations, funding data, field classification). PubMed, arXiv, Zenodo, and Dryad connectors would further broaden coverage.
    \item \textbf{Hub federation protocol}: Implementation of the Tier 2 Hub layer with a defined federation protocol (potentially based on ActivityPub) for cross-institutional discovery.
    \item \textbf{Full-text semantic search}: Tiered full-text indexing where Local Nodes (Tier~1) perform embedding-based retrieval over researcher-owned PDFs, while Hosted Edges (Tier~3) continue operating on metadata only, preserving copyright compliance at the hosted tier while enabling deep technical question-answering on self-hosted nodes.
    \item \textbf{Collaborative filtering}: Researcher recommendation based on artifact similarity, citation overlap, and complementary methodological expertise.
    \item \textbf{S-index calibration}: Empirical studies correlating S-index scores with expert assessments of research impact across a diverse panel of 20--30 researchers spanning theoretical, experimental, and computational disciplines, enabling data-driven refinement of field medians $\mu_t$, bonus weights, and reuse event weightings in the v2.0 formulation.
    \item \textbf{Expanded evaluation}: Deployment to 10--20 researchers across diverse fields to characterize S-index behavior across profiles with varying publication-to-data-to-code ratios, including systematic comparison of $\max$-merged citation counts against curated databases (Web of Science, Scopus).
    \item \textbf{Multi-language support}: Internationalization of the platform interface and support for non-English research metadata.
    \item \textbf{Institutional dashboards}: Aggregated S-index analytics at the department, laboratory, or institutional level for research assessment and strategic planning.
\end{enumerate}

\section{Conclusion}
\label{sec:conclusion}

We have presented ResearchTwin, an open-source federated platform that transforms a researcher's publications, datasets, and code into a conversational digital twin. This paper serves as a \emph{system description with preliminary evaluation}: it documents the architecture, formalizes the S-index metric, and provides initial evidence of its discriminatory power through a two-researcher case study.

The BGNO architecture cleanly separates data management concerns (caching, rate limiting, context assembly) from generative intelligence (RAG with a large language model), enabling scalable and cost-effective operation. The S-index provides a formally defined, FAIR-grounded composite metric that captures dimensions of research impact---code reuse and data sharing---invisible to citation-only measures such as the H-index. Our preliminary evaluation demonstrates that researchers with comparable H-indexes can exhibit substantially different S-index scores when their non-publication outputs differ, validating the metric's conceptual contribution. However, we emphasize that the current v2 parameterization---while substantially simplified from the v1 formulation through the FAIR gate, field normalization, and parameter-free collaboration term---remains a principled but heuristic baseline; rigorous empirical calibration against expert assessments of research impact remains essential future work before the S-index can be recommended for evaluative purposes.

The inter-agentic discovery API, built on Schema.org types and HATEOAS navigation, together with the official MCP server registered in the MCP Registry (\texttt{io.github.martinfrasch/researchtwin}), positions ResearchTwin as infrastructure for an emerging paradigm in which AI agents autonomously discover research synergies across institutional boundaries. The three-tier federated architecture ensures that this discoverability does not come at the cost of data sovereignty.

We believe that the transition from static repositories to agentic knowledge webs represents a meaningful shift in how scientific knowledge is organized, discovered, and reused. ResearchTwin represents a concrete step toward this vision, and we invite the research community to deploy, extend, and critique both the platform and the S-index formulation.

\medskip
\noindent The source code is available at \url{https://github.com/martinfrasch/ResearchTwin} under the MIT license. The S-index specification is maintained at \url{https://github.com/martinfrasch/S-index}. A hosted instance is accessible at \url{https://researchtwin.net}. The MCP server is available on PyPI as \texttt{mcp-server-researchtwin}.

\section*{Acknowledgments}

The author thanks the open-source community and early adopters for their feedback during the development of ResearchTwin. This work was conducted independently and received no external funding.


\bibliographystyle{unsrtnat}

\begin{thebibliography}{20}

\bibitem{hirsch2005}
J.~E. Hirsch,
``An index to quantify an individual's scientific research output,''
\emph{Proceedings of the National Academy of Sciences},
vol.~102, no.~46, pp.~16569--16572, 2005.
\newline\url{https://doi.org/10.1073/pnas.0507655102}

\bibitem{wilkinson2016}
M.~D. Wilkinson, M.~Dumontier, I.~J. Aalbersberg, G.~Appleton, M.~Axton, A.~Baak, N.~Blomberg, J.-W.~Boiten, L.~B. da~Silva~Santos, P.~E.~Bourne, \emph{et al.},
``The FAIR guiding principles for scientific data management and stewardship,''
\emph{Scientific Data}, vol.~3, article 160018, 2016.
\newline\url{https://doi.org/10.1038/sdata.2016.18}

\bibitem{lewis2020}
P.~Lewis, E.~Perez, A.~Piktus, F.~Petroni, V.~Karpukhin, N.~Goyal, H.~K\"{u}ttler, M.~Lewis, W.-t.~Yih, T.~Rockt\"{a}schel, S.~Riedel, and D.~Kiela,
``Retrieval-augmented generation for knowledge-intensive NLP tasks,''
in \emph{Advances in Neural Information Processing Systems (NeurIPS)}, vol.~33, pp.~9459--9474, 2020.
\newline\url{https://arxiv.org/abs/2005.11401}

\bibitem{grieves2014}
M.~Grieves and J.~Vickers,
``Digital twin: Mitigating unpredictable, undesirable emergent behavior in complex systems,''
in \emph{Transdisciplinary Perspectives on Complex Systems}, pp.~85--113, Springer, 2017.
(Concept originally proposed in 2002; formalized in M.~Grieves, ``Digital twin: Manufacturing excellence through virtual factory replication,'' white paper, 2014.)

\bibitem{schemaorg}
Schema.org Community Group,
``Schema.org vocabulary,''
\url{https://schema.org/}, accessed February 2026.

\bibitem{perplexity2025}
Perplexity AI,
``Perplexity Sonar API documentation,''
\url{https://docs.perplexity.ai/}, accessed February 2026.

\bibitem{semantic_scholar}
W.~Ammar, D.~Groeneveld, C.~Bhagavatula, I.~Beltagy, M.~Crawford, D.~Downey, J.~Dunkelberger, A.~Elgohary, S.~Feldman, V.~Ha, \emph{et al.},
``Construction of the literature graph in Semantic Scholar,''
in \emph{Proceedings of NAACL-HLT}, pp.~84--91, 2018.
\newline\url{https://doi.org/10.18653/v1/N18-3011}

\bibitem{openaire}
P.~Manghi, L.~Candela, and B.~Lossau,
``OpenAIRE: European open access infrastructure,''
\emph{D-Lib Magazine}, vol.~18, no.~11/12, 2012.
\newline\url{https://doi.org/10.1045/november2012-manghi}

\bibitem{activitypub}
C.~Webber, J.~Tallon, O.~Shepherd, A.~Guy, and E.~Prodromou,
``ActivityPub,'' W3C Recommendation, 23 January 2018.
\newline\url{https://www.w3.org/TR/activitypub/}

\bibitem{orcid}
L.~L. Haak, M.~Fenner, L.~Paglione, E.~Pentz, and H.~Ratner,
``ORCID: A system to uniquely identify researchers,''
\emph{Learned Publishing}, vol.~25, no.~4, pp.~259--264, 2012.
\newline\url{https://doi.org/10.1087/20120404}

\bibitem{figshare}
M.~Hahnel,
``Referencing: The reuse factor,''
\emph{Nature}, vol.~520, no.~7547, pp.~S2--S3, 2015.
\newline\url{https://doi.org/10.1038/520S2a}

\bibitem{fastapi}
S.~Ram\'{i}rez,
``FastAPI: Modern, fast web framework for building APIs with Python 3.6+,''
\url{https://fastapi.tiangolo.com/}, accessed February 2026.

\bibitem{mcp2024}
Anthropic,
``Model Context Protocol (MCP) specification,''
\url{https://modelcontextprotocol.io/}, accessed February 2026.

\bibitem{frasch2025qic}
M.~G. Frasch,
``The QIC-index: A novel, data-centric metric for quantifying the impact of research data sharing,''
\emph{arXiv preprint arXiv:2510.03307 [cs.DL]}, 2025.
\newline\url{https://arxiv.org/abs/2510.03307}

\end{thebibliography}

\end{document}